  \let\oldparagraph\paragraph
  \renewcommand{\paragraph}{
    \@ifstar
      \xxxParagraphStar
      \xxxParagraphNoStar
  }
  \newcommand{\xxxParagraphStar}[1]{\oldparagraph*{#1}\mbox{}}
  \newcommand{\xxxParagraphNoStar}[1]{\oldparagraph{#1}\mbox{}}
  \let\oldsubparagraph\subparagraph
  \renewcommand{\subparagraph}{
    \@ifstar
      \xxxSubParagraphStar
      \xxxSubParagraphNoStar
  }
  \newcommand{\xxxSubParagraphStar}[1]{\oldsubparagraph*{#1}\mbox{}}
  \newcommand{\xxxSubParagraphNoStar}[1]{\oldsubparagraph{#1}\mbox{}}
\patchcmd\longtable{\par}{\if@noskipsec\mbox{}\fi\par}{}{}
\newsavebox\pandoc@box
\newcommand*\pandocbounded[1]{% scales image to fit in text height/width
  \sbox\pandoc@box{#1}%
  \Gscale@div\@tempa{\textheight}{\dimexpr\ht\pandoc@box+\dp\pandoc@box\relax}%
  \Gscale@div\@tempb{\linewidth}{\wd\pandoc@box}%
  \ifdim\@tempb\p@<\@tempa\p@\let\@tempa\@tempb\fi% select the smaller of both
  \ifdim\@tempa\p@<\p@\scalebox{\@tempa}{\usebox\pandoc@box}%
  \else\usebox{\pandoc@box}%
  \fi%
}
\def\fps@figure{htbp}
\NewDocumentCommand\citeproctext{}{}
 \let\@cite@ofmt\@firstofone
 \def\@biblabel#1{}
 \def\@cite#1#2{{#1\if@tempswa , #2\fi}}
\newlength{\cslhangindent}
\newlength{\csllabelwidth}
\newenvironment{CSLReferences}[2] % #1 hanging-indent, #2 entry-spacing
 {\begin{list}{}{%
  \setlength{\itemindent}{0pt}
  \setlength{\leftmargin}{0pt}
  \setlength{\parsep}{0pt}
  % turn on hanging indent if param 1 is 1
  \ifodd #1
   \setlength{\leftmargin}{\cslhangindent}
   \setlength{\itemindent}{-1\cslhangindent}
  \fi
  % set entry spacing
  \setlength{\itemsep}{#2\baselineskip}}}
 {\end{list}}
  \renewcommand*\contentsname{Table of contents}
  \newcommand\contentsname{Table of contents}
  \renewcommand*\listfigurename{List of Figures}
  \newcommand\listfigurename{List of Figures}
  \renewcommand*\listtablename{List of Tables}
  \newcommand\listtablename{List of Tables}
  \renewcommand*\figurename{Figure}
  \newcommand\figurename{Figure}
  \renewcommand*\tablename{Table}
  \newcommand\tablename{Table}
\title{Conformal prediction without knowledge of labeled calibration
data}
\author{Jonas Flechsig\textsuperscript{1,*} \and Maximilian
Pilz\textsuperscript{1,2}}
\date{2025-09-12}
\begin{document}
\maketitle
\begin{abstract}
We extend the method of conformal prediction beyond the case relying on
labeled calibration data. Replacing the calibration scores by suitable
estimates, we identify conformity sets \(\mathcal{C}\) for
classification and regression models that rely on unlabeled calibration
data. Given a classification model with accuracy \(1-\beta\), we prove
that the conformity sets guarantee a coverage of \begin{equation*}
\mathbb{P}\left(Y\in\mathcal{C}\right)\geq 1-\alpha-\beta
\end{equation*} for an arbitrary parameter \(\alpha\in(0,1)\). The same
coverage guarantee also holds for regression models, if we replace the
accuracy by a similar exactness measure. Finally, we describe how to use
the theoretical results in practice.
\end{abstract}

\textsuperscript{1} Fraunhofer Institute for Industrial Mathematics\\
\textsuperscript{2} Nürnberg School of Health, Ohm University of Applied
Sciences Nuremberg

\textsuperscript{*} Correspondence:
\href{mailto:jonas.flechsig@itwm.fraunhofer.de}{Jonas Flechsig
\textless{}jonas.flechsig@itwm.fraunhofer.de\textgreater{}}

\newcommand{\PP}{\mathbb{P}}
\newcommand{\RR}{\mathbb{R}}
\newtheorem{example}{Example}
\newtheorem{theorem}{Theorem}
\newtheorem{lemma}{Lemma}
\newtheorem{corollary}{Corollary}
\newtheorem{definition}{Definition}
\newtheorem{remark}{Remark}

\newcommand{\JFcomment}[1]{\textcolor{blue}{\textbf{JF:} #1}}
\newcommand{\MPcomment}[1]{\textcolor{green}{\textbf{MP:} #1}}

\section{Introduction}\label{introduction}

Conformal prediction is a generic tool for uncertainty quantification in
machine learning methods. For regression problems, it provides
uncertainty intervals of confidence \(1-\alpha\) for the predicted value
for an arbitrary \(\alpha \in (0,1)\). For classification problems,
conformity sets are provided. Those are subsets of all possible classes
and for a specific classification task, the true class is covered with a
probability of at least \(1-\alpha\) by the conformity set.

The great feature of conformal prediction is its coverage guarantee
without any assumption on the data distribution (Angelopoulos and Bates
2022). Moreover, it is a model-agnostic method, this means that it can
be applied to any prediction model. Detailed theoretical background on
conformal prediction can be found in Angelopoulos, Barber, and Bates
(2025). An introduction to conformal prediction and its application in
Python is given by Molnar (2023). Many extensions of the standard
conformal prediction framework exist. For example, Tibshirani et al.
(2020) treat the question of a covariate shift in the data, Gibbs and
Candès (2021) introduce conformal prediction with coverage guarantees in
the adversarial sequence model, cross conformal prediction is introduced
and further investigated by Vovk (2015) and Vovk et al. (2018), and
localized conformal prediction is presented by Guan (2022).

Computing uncertainty measures by conformal prediction requires a
calibration set that is independent of the training set but whose true
outcomes are known. This set is used to identify how certain (or
\emph{conformal}) different predictions are and to define a boundary
(depending on the value of \(\alpha\)) from which on a predicted value
is added to the conformity interval or set, respectively. However, this
approach assumes that a labeled calibration set is available and this is
not guaranteed in practice. Data protection issues may make it necessary
to only provide the trained model without any possibility to access the
training or any other labeled data.

In this paper, we describe how conformal prediction can still be applied
if only an unlabeled test set is available. Concretely, the true labels
are replaced with labels predicted by the machine learning model. We
show that still a coverage guarantee can be given that depends on
\(\alpha\) and the model's accuracy in Section~\ref{sec-proof}. This
result is illustrated on a simple example in Section~\ref{sec-example}.
In Section~\ref{sec-discussion}, we briefly sum up the result and
describe how it can be used in practice.

\section{Conformal prediction with knowledge of labeled calibration
data}\label{conformal-prediction-with-knowledge-of-labeled-calibration-data}

Let us suppose we have a set of pairs \((X,Y)\) where \(X\) is an input
and \(Y\) is the corresponding output, called the \textit{label} of
\(X\), and let us assume that the pairs \((X,Y)\) are independently
identically distributed (i.i.d.) random variables. Such a relation is
used to determine a function \(\hat{f}:I\rightarrow T\) that maps \(X\)
from a space of inputs \(I\) to \(\hat{f}(X)\) in a target space \(T\).

In fact, given a set of independent \textit{training data}, the function
\(\hat{f}\) is chosen to minimize the distance between \(\hat{f}(X)\)
and \(Y\) for all pairs \((X,Y)\) in the training data. If an input
\(X\) indeed determines the output \(Y\) up to an error, the function
\(\hat{f}\) found on the training data should also perform well on
independent test data with the identical distribution as the training
data.

How well a function \(\hat{f}\) describes the relationship between input
and output is measured by uncertainty scores. In the approach of
\textit{conformal prediction}, we usually keep a set of pairs
\(\{(X_k,Y_k)\mid 1\leq k\leq n\}\) as \textit{calibration data}. The
calibration data are not used for training. On these data, conformal
prediction uses a \textit{score function} \(s(X,Y)\) to quantify the
uncertainty of the trained model with the convention that larger values
indicate worse agreement between the input and output.

\begin{example}
\label{ex:score_functions}
If $\hat{f}$ is a classification model, a common score function is 
\begin{equation}
\label{ex:score_functions_class}
s(X,Y)=1-p_{\hat{f},Y}(X)
\end{equation}
where $p_{\hat{f},Y}(X)$ is the probability that the model $\hat{f}$ assigns the class $Y$ to the input $X$.
If $\hat{f}$ is a regression model, a common score function is 
\begin{equation}
\label{ex:score_functions_reg}
s(X,Y)=d(Y,\hat{f}(X))
\end{equation}
with $d$ a metric on the target space $T$.
\end{example}

The score function yields a set of \textit{calibration scores}

\begin{equation}
\label{eq:def_s_k}
s_k = s(X_k,Y_k) 
\end{equation}

for \(1\leq k\leq n\). Ordering along these calibration scores induces
an order on the calibration data. The order clearly depends on the
calibration data \((X_k, Y_k)\) and moreover on the model \(\hat{f}\),
i.e.~on the training data. Given \(\alpha\in(0,1)\), let \(q\) be the
\(\frac{\lceil(n+1)(1-\alpha)\rceil}{n}\)-quantile of the calibration
scores and

\begin{equation}
\mathcal{C}(X_{\text{test}})=\{y\mid s(X_{\text{test}},y)\leq q\}.
\end{equation}

The following theorem is the main motivation for the study of conformal
prediction. Its proof can be found in Angelopoulos and Bates (2022)
(Theorem 1 and Appendix D) or Papadopoulos et al. (2002) (Proposition
1). More theoretical background is provided by Vovk, Gammerman, and
Saunders (1999).

\begin{theorem}[Conformal coverage guarantee]
\label{thm:conformal_coverage_guarantee}
Suppose $(X_k,Y_k)$ for $1\leq k\leq n$ and $(X_{\text{test}}, Y_{\text{test}})$ are i.i.d. random variables. The quantile $q$ and the set $\mathcal{C}(X_{test})$ are defined as above. Then the following holds: 
\begin{equation}
\mathbb{P}(Y_{test}\in\mathcal{C}(X_{test}))\geq 1-\alpha.
\end{equation}
\end{theorem}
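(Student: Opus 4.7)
The plan is to reduce the coverage event to a statement about the rank of a single score among $n+1$ exchangeable scores, and then to exploit the i.i.d. assumption.

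First, I would rewrite the event. Let $s_{\text{test}}=s(X_{\text{test}},Y_{\text{test}})$ and note that, by the definition of $\mathcal{C}(X_{\text{test}})$, the event $\{Y_{\text{test}}\in\mathcal{C}(X_{\text{test}})\}$ coincides with $\{s_{\text{test}}\leq q\}$. Thus it suffices to bound $\mathbb{P}(s_{\text{test}}\leq q)$ from below. Set $k=\lceil(n+1)(1-\alpha)\rceil$; by construction, $q$ is exactly the $k$-th order statistic of the calibration scores $s_1,\ldots,s_n$ (with the convention $q=+\infty$ when $k>n$, in which case the inequality is trivial).

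Next I would argue by exchangeability. Because the pairs $(X_k,Y_k)$ and $(X_{\text{test}},Y_{\text{test}})$ are i.i.d., the scores $s_1,\ldots,s_n,s_{\text{test}}$ are exchangeable random variables. Assuming for the moment no ties (which is standard; otherwise one argues with a symmetric randomized tie-breaking, cf.\ Angelopoulos and Bates 2022), the rank of $s_{\text{test}}$ among the full collection of $n+1$ scores is uniformly distributed on $\{1,2,\ldots,n+1\}$. The key combinatorial step is then to verify that $s_{\text{test}}\leq q$ holds precisely when this rank is at most $k$: if $s_{\text{test}}$ sits at position $r$ in the sorted combined list, then the $k$-th order statistic of $s_1,\ldots,s_n$ alone equals the $k$-th smallest of the combined list when $r>k$ and the $(k{+}1)$-th smallest when $r\leq k$, and in either case $s_{\text{test}}\leq q$ is equivalent to $r\leq k$.

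Finally, I would conclude by a direct computation:
\begin{equation*}
\mathbb{P}(s_{\text{test}}\leq q)=\mathbb{P}(r\leq k)=\frac{k}{n+1}=\frac{\lceil(n+1)(1-\alpha)\rceil}{n+1}\geq 1-\alpha,
\end{equation*}
where the last inequality uses $\lceil x\rceil\geq x$.

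The main obstacle is not the arithmetic but the bookkeeping in the rank argument: one has to carefully relate the $\lceil(n+1)(1-\alpha)\rceil/n$-quantile of $n$ values to the rank of $s_{\text{test}}$ in the augmented set of $n+1$ values, and in particular make sure the calculation is correct at the boundary values of $k$ and in the presence of ties. Once this identification is in place, the i.i.d.\ assumption does all the remaining work through exchangeability.
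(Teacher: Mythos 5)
Your proof is correct and is exactly the standard exchangeability/rank argument from Angelopoulos and Bates (2022, Appendix D), which is the proof this paper cites rather than reproduces. The rank bookkeeping (rank $r\leq k$ iff $s_{\text{test}}\leq q$, with $q$ the $k$-th calibration order statistic and $k=\lceil(n+1)(1-\alpha)\rceil$), the boundary case $k>n$, and the tie-handling caveat are all handled correctly, so nothing further is needed.
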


\section{Conformal prediction without knowledge of labeled calibration
data}\label{sec-proof}

\subsection{Model assumption}\label{model-assumption}

We want to focus on the situation where we cannot use a labeled subset
of data for calibration but unlabeled data are available. For
considering classification and regression models, we want to use the
following measure of a model's predictive performance:

\begin{definition}
\label{def:exactness_measure}
Let $\hat{f}$ be a prediction model. 
Given $\tilde{\beta}\in[0,\infty)$, and $\beta\in[0,1)$, we call the model $\hat{f}$ $(\tilde{\beta},\beta)$-exact if 
\begin{equation}
\label{eq:exactness_measure}
\mathbb{P}(Y\in\{Z\in T\mid\vert s(X,Z)-s(X,\hat{f}(X))\vert\leq\tilde{\beta}\})\geq 1-\beta,
\end{equation}
where $Y$ is the correct output given the input $X$.
\end{definition}

\begin{corollary}
\label{ex:exactness_measure_classification}
Let $\hat{f}$ be a classification model and
$s$ the score function from Example \ref{ex:score_functions}. 
Then $\hat{f}$ is $(0, \beta)$-exact if and only if $\hat{f}$ has an accuracy of at least $1-\beta$. 
\end{corollary}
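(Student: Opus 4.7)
The plan is to unfold both definitions and show they coincide when $\tilde{\beta}=0$ and $s$ is the classification score $s(X,Z)=1-p_{\hat{f},Z}(X)$. Setting $\tilde{\beta}=0$ in (\ref{eq:exactness_measure}) reduces the tolerance band to equality: the set in the probability becomes
\begin{equation*}
\{Z\in T\mid s(X,Z)=s(X,\hat{f}(X))\}=\{Z\in T\mid p_{\hat{f},Z}(X)=p_{\hat{f},\hat{f}(X)}(X)\}.
\end{equation*}
I would then invoke the definition of $\hat{f}(X)$ as the predicted class, i.e.\ the argmax of $Z\mapsto p_{\hat{f},Z}(X)$, so that this set is precisely the set of classes attaining the maximal predicted probability at $X$.

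For the forward direction, suppose $\hat{f}$ has accuracy at least $1-\beta$, meaning $\mathbb{P}(Y=\hat{f}(X))\geq 1-\beta$. Since $\hat{f}(X)$ trivially belongs to the set above, monotonicity of probability yields
\begin{equation*}
\mathbb{P}\bigl(Y\in\{Z\mid s(X,Z)=s(X,\hat{f}(X))\}\bigr)\geq\mathbb{P}(Y=\hat{f}(X))\geq 1-\beta,
\end{equation*}
which is $(0,\beta)$-exactness.

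For the converse, I would argue that under the standard convention that $\hat{f}(X)$ is the unique argmax (or, more generally, that $\hat{f}$ picks a designated representative among maximizers so that the event $\{Y=\hat{f}(X)\}$ agrees almost surely with the event that $Y$ attains the maximal predicted probability), the set in question collapses to $\{\hat{f}(X)\}$, and $(0,\beta)$-exactness directly reads $\mathbb{P}(Y=\hat{f}(X))\geq 1-\beta$, i.e.\ accuracy at least $1-\beta$. The main obstacle I expect is precisely this tie-breaking subtlety: with ties in the predicted probabilities, $(0,\beta)$-exactness is strictly weaker than accuracy $\geq 1-\beta$, and one needs either to assume almost-sure uniqueness of the argmax or to adopt a convention on accuracy that counts $Y$ as correct whenever it achieves the top predicted probability. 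I would flag this assumption explicitly in the proof and otherwise the equivalence is immediate from the definitions.
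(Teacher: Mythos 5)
Your proof is correct and takes essentially the same route as the paper's: both unfold the $\tilde{\beta}=0$ condition into the set $\{Z\in T\mid p_{\hat{f},Z}(X)=p_{\hat{f},\hat{f}(X)}(X)\}=\text{argmax}_{Z\in T}\,p_{\hat{f},Z}(X)$ and identify the resulting probability with the accuracy. The tie-breaking subtlety you flag is genuine, but the paper sidesteps it by simply taking $\mathbb{P}\left(Y\in\text{argmax}_{Z\in T}\,p_{\hat{f},Z}(X)\right)$ as \emph{the definition} of accuracy, which is exactly the convention you propose adopting.
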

\begin{proof}
The following equation shows that given the score function from Example~\ref{ex:score_functions} 
the probability from \eqref{eq:exactness_measure} coincides with the definition of the accuracy. 
\begin{align*}
& \; \mathbb{P}\left(Y\in\lbrace Z\in T\mid\vert 1-p_{\hat{f},Z}(X)-\left(1-p_{\hat{f},\hat{f}(X)}(X)\right)\vert\leq0\rbrace\right)
\\
= & \; \mathbb{P}\left(Y\in\lbrace Z\in T\mid\vert p_{\hat{f},Z}(X)-p_{\hat{f},\hat{f}(X)}(X)\vert\leq0\rbrace\right)
\\
= & \; \mathbb{P}\left(Y\in\text{argmax}_{Z\in T}p_{\hat{f},Z}(X)\right).
\end{align*}
In the last step, we use the fact that the probability $p_{\hat{f},Z}(X)$ is maximal for $Z=\hat{f}(X)$ among all $Z\in T$.
The last line is the definition of the accuracy. 
Thus, the Corollary follows.
\end{proof}

\begin{corollary}
\label{ex:exactness_measure_regression}
Let $\hat{f}:I\rightarrow T$ with $T=\mathbb{R}^l$ be a regression model with median absolute error $MedAE$
and $s$ the score function from Example \ref{ex:score_functions}
with $d:\mathbb{R}^l\times\mathbb{R}^l\rightarrow\mathbb{R}, (x,y)\mapsto \sum_{i=1}^l |x_i-y_i |$. 
Then $\hat{f}$ is $(MedAE, 0.5)$-exact.
Note that the same assertion holds for all other quantiles of $d(Y,\hat{f}(X))$.
\end{corollary}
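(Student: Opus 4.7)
The plan is to unravel the definitions and reduce the claim to the defining property of the median.

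First I would compute $s(X,\hat{f}(X))$. Since $s(X,Y)=d(Y,\hat{f}(X))$ and $d$ is a metric (in fact the $\ell_1$ distance), we have $s(X,\hat{f}(X)) = d(\hat{f}(X),\hat{f}(X)) = 0$. Consequently, for every $Z\in T$,
\begin{equation*}
\vert s(X,Z) - s(X,\hat{f}(X))\vert = d(Z,\hat{f}(X)),
\end{equation*}
so the set on the left hand side of \eqref{eq:exactness_measure} simplifies to $\{Z\in T\mid d(Z,\hat{f}(X))\leq\tilde{\beta}\}$.

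Next I would specialize to $\tilde{\beta}=MedAE$ and $\beta=0.5$. With the simplification above, the event $\{Y\in\{Z\mid d(Z,\hat{f}(X))\leq MedAE\}\}$ is exactly $\{d(Y,\hat{f}(X))\leq MedAE\}$. By definition, $MedAE$ is the median of the real-valued, non-negative random variable $d(Y,\hat{f}(X))=\sum_{i=1}^l \vert Y_i - \hat{f}(X)_i\vert$ (this is the natural multivariate extension implicit in the statement). Hence the defining property of the median gives
\begin{equation*}
\mathbb{P}\bigl(d(Y,\hat{f}(X))\leq MedAE\bigr)\geq 0.5 = 1-\beta,
\end{equation*}
which is exactly $(MedAE,0.5)$-exactness.

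For the final remark, replacing the median by the $(1-\beta)$-quantile $q_{1-\beta}$ of $d(Y,\hat{f}(X))$ yields $\mathbb{P}(d(Y,\hat{f}(X))\leq q_{1-\beta})\geq 1-\beta$ by the same quantile argument, so $\hat{f}$ is $(q_{1-\beta},\beta)$-exact. I do not anticipate a real obstacle here; the only subtle point is to state clearly that $MedAE$ is interpreted as the median of the scalar random variable $d(Y,\hat{f}(X))$ under the chosen $\ell_1$ metric $d$, so that the statement is unambiguous in the vector-valued setting $T=\mathbb{R}^l$.
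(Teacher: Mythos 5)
Your proposal is correct and follows essentially the same route as the paper's proof: both observe that $s(X,\hat{f}(X))=d(\hat{f}(X),\hat{f}(X))=0$, reduce the exactness condition to the event $d(Y,\hat{f}(X))\leq MedAE$, and conclude by the defining property of the median (resp.\ the $(1-\beta)$-quantile). Your write-up is in fact slightly more careful than the paper's, which conflates $Z$ and $Y$ inside the set description.
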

\begin{proof}
The proof follows from the definition of the median absolute error:
\begin{align*}
& \;\mathbb{P}(Y\in\lbrace Z\in T\mid\vert d(Y, \hat{f}(X)) - d(\hat{f}(X),\hat{f}(X))
\vert\leq MedAE \rbrace) \\
=& \;\mathbb{P}\left(Y\in\left\lbrace Z\in T\mid\sum_{i=1}^l\vert Y_i - \hat{f}(X)_i \vert\leq MedAE \right\rbrace\right) \\
\geq& \;0.5.
\end{align*}
\end{proof}

\subsection{Proof of coverage
guarantee}\label{proof-of-coverage-guarantee}

Now we want to prove a conformal coverage guarantee based on unlabeled
calibration data. To achieve this, let \(n\) be the number of elements
in the set of calibration data. For this \(n\), let us fix the space
\((I\times T)^n\times T\), and equip it with a probability measure
\(\mathbb{P}\).

Since we consider unlabeled calibration data, we include the calibration
set in the probability space. In contrast, Angelopoulos and Bates (2022)
restrict the probability space to the \((n+1)\)-st component of this
probability space. However, we will be able to rely on the results from
Angelopoulos and Bates (2022), if we fix (unknown) labels
\(y_1,...,y_n\) of the calibration data. As it is standard, we shorten
notation using \begin{equation*}
\{a\leq b\} = \{\omega\in(I\times T)^n\times T\mid a(\omega)\leq b(\omega)\}.
\end{equation*}

The calibration scores are estimated as follows. Let \begin{equation}
\label{eq:def_T_tilde_beta_X}
T_{\tilde{\beta}}(X) = \left\{y\in T\mid\vert s(X,\hat{f}(X))-s(X,y)\vert\leq\tilde{\beta}\right\}
\end{equation} and let \(\hat{Y}=\hat{Y}(X)\) be an arbitrary element
from the following set: \begin{equation}
\label{eq:def_Y_k_hat}
\text{argmax}_{y\in T_{\tilde{\beta}}(X)}s(X,y).
\end{equation} Moreover, for each \(1\leq k\leq n\), we use the
following abbreviation: \begin{equation}
\label{eq:def_s_k_hat}
\hat{s}_k = s(X_k,\hat{Y}_k).
\end{equation}

The following three lemmas will be needed to prove the main theorem of
this paper.

\begin{lemma}
\label{lem:s_k_hat_geq_s_k}
If $\hat{f}$ is $(\tilde{\beta},\beta)$-exact, 
then for random variables $(X,Y)$ i.i.d. to the training data of $\hat{f}$ we obtain 
\begin{equation}
\label{eq:score_function_estimation_claim}
\mathbb{P}\left(s\left(X,\hat{Y}\right)\geq s(X,Y)\right)\geq 1-\beta.
\end{equation}
\end{lemma}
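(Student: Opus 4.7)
The plan is to show that the event $\{Y \in T_{\tilde{\beta}}(X)\}$, whose probability we control via $(\tilde{\beta},\beta)$-exactness, is contained in the event $\{s(X,\hat{Y}) \geq s(X,Y)\}$. Monotonicity of probability then immediately delivers the desired bound.

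First, I would unpack Definition \ref{def:exactness_measure}: since $\hat{f}$ is $(\tilde{\beta},\beta)$-exact and $T_{\tilde{\beta}}(X)$ from \eqref{eq:def_T_tilde_beta_X} is precisely the set appearing inside the probability in \eqref{eq:exactness_measure}, we have
\begin{equation*}
\mathbb{P}\bigl(Y \in T_{\tilde{\beta}}(X)\bigr) \geq 1 - \beta.
\end{equation*}
Next, I would invoke the definition of $\hat{Y}$ from \eqref{eq:def_Y_k_hat}: $\hat{Y}$ maximizes $s(X,\cdot)$ over $T_{\tilde{\beta}}(X)$. Hence, on the event $\{Y \in T_{\tilde{\beta}}(X)\}$, the random variable $Y$ is a feasible competitor in the argmax, so $s(X,\hat{Y}) \geq s(X,Y)$ holds pointwise.

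This yields the set inclusion
\begin{equation*}
\{Y \in T_{\tilde{\beta}}(X)\} \subseteq \{s(X,\hat{Y}) \geq s(X,Y)\},
\end{equation*}
and taking probabilities gives $\mathbb{P}(s(X,\hat{Y}) \geq s(X,Y)) \geq \mathbb{P}(Y \in T_{\tilde{\beta}}(X)) \geq 1 - \beta$, which is exactly \eqref{eq:score_function_estimation_claim}.

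There is no real obstacle here; the lemma is a direct unpacking of definitions. The only minor subtlety worth flagging is measurability: one must ensure that $\hat{Y}(X)$ can be chosen as a measurable selection from the argmax set so that $s(X,\hat{Y})$ is a bona fide random variable, but this is implicit in the phrasing ``$\hat{Y} = \hat{Y}(X)$ is an arbitrary element'' of the argmax set and does not affect the inequality.
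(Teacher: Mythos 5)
Your proposal is correct and follows essentially the same route as the paper's proof: both identify the event $\{Y \in T_{\tilde{\beta}}(X)\}$ (equivalently, $\{\vert s(X,\hat{f}(X))-s(X,Y)\vert\leq\tilde{\beta}\}$) as having probability at least $1-\beta$ by exactness, and observe that on this event $Y$ competes in the argmax defining $\hat{Y}$, forcing $s(X,\hat{Y})\geq s(X,Y)$. Your extra remark on measurable selection is a reasonable aside but not needed for the argument.
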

\begin{proof}
Since $\hat{f}$ is $(\tilde{\beta},\beta)$-exact, 
$\mathbb{P}\left(\vert s(X,\hat{f}(X))-s(X,Y)\vert\leq\tilde{\beta}\right)\geq 1-\beta$.
In this subset, the definition of $\hat{Y}$ (see \eqref{eq:def_T_tilde_beta_X} and \eqref{eq:def_Y_k_hat}) in particular implies  
\begin{equation}
\label{eq:score_function_estimation}
s(X,\hat{Y})\geq s(X,Y).
\end{equation}
This proves the claim.
\end{proof}

Lemma \ref{lem:s_k_hat_geq_s_k} is the basis to apply the following
Lemma to the paired random variable \((s(X,\hat{Y}),s(X,Y))\) and its
representatives \((\hat{s}_k,s_k)\) for \(1\leq k\leq n\) that are based
on the calibration set.

\begin{lemma}
\label{lem:technical_lemma_2}
Let $(U,V)$ be a paired random variable such that $\mathbb{P}\left( U \geq V \right) \geq 1-\beta$.
Let $(U_1,V_1), \dots, (U_n,V_n)$ be random variables that are i.i.d. to $(U,V)$.
Then, for any $\alpha \in (0,1)$ and the empirical quantiles $\hat{q}_U(1-\alpha)$
and $\hat{q}_V(1-\alpha)$ it holds that
\begin{equation*}
\mathbb{P}\left( \hat{q}_U(1-\alpha) \geq \hat{q}_V(1-\alpha) \right) \geq 1-\beta.
\end{equation*}
\end{lemma}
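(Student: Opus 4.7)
The plan is to transfer the hypothesis $U\geq V$ from the marginal pair up to the empirical quantiles by routing through the empirical cumulative distribution functions.

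The deterministic core is the following observation. If $u_1,\dots,u_n$ and $v_1,\dots,v_n$ are any real numbers with $u_i\geq v_i$ for every $i$, then for every threshold $t\in\mathbb{R}$ we have the set inclusion $\{i:u_i\leq t\}\subseteq\{i:v_i\leq t\}$, because $u_i\leq t$ forces $v_i\leq u_i\leq t$. Hence the empirical CDFs satisfy $\hat F_u(t)\leq\hat F_v(t)$ pointwise in $t$. Taking the generalised inverse at the level $1-\alpha$ reverses the inequality and gives $\hat q_u(1-\alpha)\geq\hat q_v(1-\alpha)$. This step is purely order-theoretic and involves no probability.

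On the probabilistic side, I would set $A_i=\{U_i\geq V_i\}$; the hypothesis and the i.i.d.\ assumption give $\mathbb{P}(A_i)\geq 1-\beta$ for every $i$, and the deterministic step above shows the event inclusion $\bigcap_{i=1}^n A_i\subseteq\{\hat q_U(1-\alpha)\geq\hat q_V(1-\alpha)\}$. The lemma then reduces to the inequality $\mathbb{P}\bigl(\bigcap_{i=1}^n A_i\bigr)\geq 1-\beta$.

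I expect this last inequality to be the main obstacle. A naive product-rule bound based on independence of the pairs only yields $(1-\beta)^n$, which is much too weak and deteriorates with $n$. To recover the claimed $1-\beta$, one has to avoid asking for all $n$ pairs to be ordered simultaneously; the natural route is an exchangeability argument that reduces the event on the $\lceil n(1-\alpha)\rceil$-th order statistic to a single marginal event $A_i$ for a suitably chosen (possibly random) index $i$, so that the probabilistic cost stays at $\beta$ rather than being multiplied by $n$.
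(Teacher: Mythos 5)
Your deterministic step (pointwise domination of the empirical CDFs, hence of the quantiles, whenever $u_i\geq v_i$ for \emph{every} $i$) is correct, but the proposal stops exactly where the proof has to happen, so there is a genuine gap. The intermediate reduction to $\mathbb{P}\bigl(\bigcap_{i=1}^n A_i\bigr)\geq 1-\beta$ is a dead end rather than an obstacle to be overcome: that inequality is simply false in general (independent $A_i$ with $\mathbb{P}(A_i)=1-\beta$ give $(1-\beta)^n$), so no argument can close the proof along that route. The pivot you gesture at --- reducing to a single marginal event $A_j$ for one well-chosen index --- is indeed the route the paper takes, via a pigeonhole argument you would still need to supply: writing $k^\ast$ for the rank of the quantile and sorting so that $U_{k^\ast}=\hat q_U(1-\alpha)$, the index set $\{1,\dots,k^\ast\}$ and the set $\{j\mid V_j\geq\hat q_V(1-\alpha)\}$ have cardinalities $k^\ast$ and at least $n-k^\ast+1$, which sum to more than $n$, so they intersect in some $j$; for that $j$ the single inequality $U_j\geq V_j$ already yields $\hat q_U(1-\alpha)\geq U_j\geq V_j\geq\hat q_V(1-\alpha)$.

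However, even granting that combinatorial step, the remaining probabilistic step --- bounding $\mathbb{P}(U_J\geq V_J)$ below by $1-\beta$ when $J$ is the \emph{data-dependent} index produced by the pigeonhole --- does not follow from the marginal hypothesis $\mathbb{P}(U\geq V)\geq 1-\beta$, because $J$ is selected using the values of the $V_j$'s and is therefore biased toward pairs violating the ordering. A concrete instance: let $(U,V)=(1,0)$ with probability $1-\beta$ and $(0,10)$ with probability $\beta$, so $\mathbb{P}(U\geq V)=1-\beta$; for the quantile of rank $k^\ast=n$ one gets $\mathbb{P}(\max_i U_i\geq\max_i V_i)=(1-\beta)^n<1-\beta$, and for fixed $\alpha<\beta$ the failure probability even tends to $1$ as $n\to\infty$. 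So the ``exchangeability argument'' you hope for cannot be a routine one; this selection-bias issue is the actual content of the lemma (and is also the point at which the paper's own proof asserts rather than proves the bound), and your proposal does not address it.
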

\begin{proof}
Without lose of generality we can assume that $U_1\leq U_2\leq\dots\leq U_n$. 
If we consider the random variables $V_i$, there exists a bijection 
\begin{equation*}
\varphi:\{1,\dots,n\}\rightarrow\{1,\dots,n\}, k\mapsto i_k
\end{equation*}  
such that $V_{i_1}\leq V_{i_2}\leq\dots\leq V_{i_n}$.
Let $k^\ast$ be the index of the quantile $\hat{q}_U(1-\alpha)$.
This further implies that $V_{i_{k^\ast}}$ is the quantile $\hat{q}_V(1-\alpha)$.

\textit{Claim:} 
There exists an index $j\leq k^\ast$ such that $V_j\geq\hat{q}_V(1-\alpha)$

\textit{Proof of the claim:} 
If we assume that no such index $j$ exists, the set $\varphi^{-1}(\{1,\dots,k^\ast\})$ is disjoint from $\{k^\ast,\dots,n\}$. 
These sets have cardinalities $\vert\varphi^{-1}(\{1,\dots k^\ast\})\vert=k^\ast$ and $\vert\{k^\ast,\dots,n\}\vert=n-k^\ast+1$. 
The cardinalities of the disjoint sets sum up to $n+1$ but both are contained in a set with $n$ elements - a contradiction.   
This proves the claim.

If $U_j\geq V_j$ for an index $j$ that satisfies the condition from the claim, we obtain:
\begin{equation}
\hat{q}_U(1-\alpha)=U_{k^\ast}\geq U_j\geq V_j\geq\hat{q}_V(1-\alpha)
\end{equation}
Since $\mathbb{P}\left( U \geq V \right) \geq 1-\beta$, the event $U_j\geq V_j$ has probability at least $1-\beta$ 
and we obtain $\mathbb{P}\left( \hat{q}_U(1-\alpha) \geq \hat{q}_V(1-\alpha) \right) \geq 1-\beta$.
\end{proof}

\begin{lemma}
\label{lem:technical_lemma}
Let $A$ and $B$ be events in a given probability space with probability measure $\mathbb{P}$. If $\mathbb{P}\left(B\right)>0$, then $\mathbb{P}\left(A\mid B\right)\geq \frac{\mathbb{P}\left(A\right)-\mathbb{P}\left(B^C\right)}{\mathbb{P}\left(B\right)}$. 
\end{lemma}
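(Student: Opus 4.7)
The plan is to prove this by a direct manipulation of the conditional probability definition, exploiting the inclusion-exclusion-type relation $\mathbb{P}(A) = \mathbb{P}(A\cap B) + \mathbb{P}(A\cap B^C)$.

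First, I would rewrite $\mathbb{P}(A\cap B)$ by splitting $A$ along $B$ and $B^C$, which gives $\mathbb{P}(A\cap B) = \mathbb{P}(A) - \mathbb{P}(A\cap B^C)$. Next, I would bound $\mathbb{P}(A\cap B^C)$ from above by monotonicity of the probability measure, since $A\cap B^C\subseteq B^C$, yielding $\mathbb{P}(A\cap B^C)\leq \mathbb{P}(B^C)$. Combining these two relations, I obtain $\mathbb{P}(A\cap B)\geq \mathbb{P}(A)-\mathbb{P}(B^C)$.

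Finally, using $\mathbb{P}(B) > 0$, I would divide by $\mathbb{P}(B)$ and invoke the definition $\mathbb{P}(A\mid B) = \mathbb{P}(A\cap B)/\mathbb{P}(B)$ to obtain the claimed inequality. There is no genuine obstacle here; the only thing to double-check is that the right-hand side is not vacuous (it may be negative, in which case the statement is trivially true since $\mathbb{P}(A\mid B)\geq 0$, but the inequality still holds as stated). The argument is essentially two lines and is a standard tool often applied in coverage-guarantee proofs, which is presumably its intended role in the subsequent main theorem.
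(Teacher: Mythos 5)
Your proof is correct and is essentially the paper's own argument: the paper writes the same decomposition via the law of total probability, $\mathbb{P}(A\mid B) = \bigl(\mathbb{P}(A) - \mathbb{P}(A\mid B^C)\,\mathbb{P}(B^C)\bigr)/\mathbb{P}(B)$, and bounds $\mathbb{P}(A\mid B^C)\,\mathbb{P}(B^C) = \mathbb{P}(A\cap B^C) \leq \mathbb{P}(B^C)$, which is exactly your step. If anything, your phrasing in terms of $\mathbb{P}(A\cap B^C)$ is marginally cleaner since it avoids conditioning on $B^C$ when $\mathbb{P}(B^C)$ could be zero.
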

\begin{proof}
Using the law of total probability, we observe: 
\begin{equation}
\label{eq:est_law_total_prob}
\mathbb{P}\left(A\mid B\right) = \frac{\mathbb{P}\left(A\right) - \mathbb{P}\left(A\mid B^C\right)\cdot \mathbb{P}\left(B^C\right)}{\mathbb{P}\left(B\right)} \geq \frac{\mathbb{P}\left(A\right) - \mathbb{P}\left(B^C\right)}{\mathbb{P}\left(B\right)}
\end{equation}
This proves the claim.
\end{proof}

\begin{theorem}
\label{thm:conformal_coverage_guarantee_without_labeled_data}
Suppose $(X_k,Y_k)$ for $1\leq k\leq n$ and $(X_{test}, Y_{test})$ are i.i.d. random variables, 
$\hat{f}$ is a $(\tilde{\beta},\beta)$-exact model and 
$\hat{q}=\frac{\lceil(n+1)(1-\alpha)\rceil}{n}$-quantile of $\hat{s}_k$ for $1\leq k\leq n$. 
If 
\begin{equation*}
\mathcal{C}(X_{test}) = \{((x_1,y_1),\dots,(x_n,y_n),y)\mid s\left(X_{test}, y\right)\leq \hat{q}(x_1,\dots,x_n)\}.
\end{equation*}
the following holds: 
\begin{equation}
\label{eq:ccg_without_labeled_data}
\mathbb{P}(((X_1,Y_1),\dots,(X_n,Y_n),Y_{test})\in\mathcal{C}(X_{test}))\geq 1-\alpha-\beta.
\end{equation}
\end{theorem}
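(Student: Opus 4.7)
The plan is to reduce the statement to the standard conformal guarantee (Theorem~\ref{thm:conformal_coverage_guarantee}) applied to the unobserved true calibration scores $s_k = s(X_k, Y_k)$, and then to pay an additive price of $\beta$ for not knowing these scores exactly. Concretely, let $q$ denote the $\frac{\lceil(n+1)(1-\alpha)\rceil}{n}$-quantile of the true scores $s_1,\dots,s_n$, so that $\hat{q}$ is the same empirical quantile evaluated on the estimated scores $\hat{s}_1,\dots,\hat{s}_n$. Because the triples $(X_k,Y_k)$ and $(X_{test},Y_{test})$ are i.i.d., Theorem~\ref{thm:conformal_coverage_guarantee} directly gives
\begin{equation*}
\mathbb{P}\bigl(s(X_{test},Y_{test}) \leq q\bigr) \geq 1-\alpha,
\end{equation*}
which I will call event $A$. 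The enlargement of the probability space to $(I\times T)^{n}\times T$ does not affect this statement, since the marginal distribution of the test point is unchanged.

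Next, I would control the gap between $\hat{q}$ and $q$. Lemma~\ref{lem:s_k_hat_geq_s_k} applied to $(X,Y)$ i.i.d.\ to the training distribution yields $\mathbb{P}(s(X,\hat{Y}) \geq s(X,Y)) \geq 1-\beta$. Setting $U_k=\hat{s}_k$ and $V_k=s_k$, the pairs $(U_k,V_k)$ are i.i.d.\ copies of $(s(X,\hat{Y}),s(X,Y))$, so Lemma~\ref{lem:technical_lemma_2} (applied at the conformal quantile level) gives
\begin{equation*}
\mathbb{P}(\hat{q} \geq q) \geq 1-\beta,
\end{equation*}
which I will call event $B$.

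To finish, I would observe that on $A\cap B$ we have $s(X_{test},Y_{test}) \leq q \leq \hat{q}$, so
\begin{equation*}
A\cap B \subseteq \bigl\{s(X_{test},Y_{test}) \leq \hat{q}\bigr\} = \bigl\{((X_1,Y_1),\dots,(X_n,Y_n),Y_{test})\in\mathcal{C}(X_{test})\bigr\}.
\end{equation*}
Lemma~\ref{lem:technical_lemma} (or equivalently a union bound) then yields $\mathbb{P}(A\cap B) \geq \mathbb{P}(A) - \mathbb{P}(B^C) \geq (1-\alpha) - \beta$, which is exactly the claimed bound \eqref{eq:ccg_without_labeled_data}.

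The main obstacle is bookkeeping rather than a deep step: one has to verify that Theorem~\ref{thm:conformal_coverage_guarantee} can be invoked in the enlarged probability space $(I\times T)^n\times T$ even though the calibration labels $Y_k$ are not observed (they are still well-defined random variables with the correct joint distribution), and one must check that the pairs $(\hat{s}_k,s_k)$ really inherit the i.i.d.\ structure required by Lemma~\ref{lem:technical_lemma_2} since $\hat{Y}_k$ is a deterministic (possibly randomized) function of $X_k$ through the fixed model $\hat{f}$. Both checks are essentially formal, and once they are written out the coverage bound follows by combining the three lemmas as described.
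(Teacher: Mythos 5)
Your proposal is correct and follows essentially the same route as the paper: the same events $A$ ($s(X_{test},Y_{test})\leq q$), $B$ ($\hat{q}\geq q$), the same three lemmas in the same roles, and the same final bound $\mathbb{P}(A\cap B)\geq\mathbb{P}(A)-\mathbb{P}(B^C)$ (the paper routes this through the conditional-probability form of Lemma~\ref{lem:technical_lemma}, which, as you note, is equivalent to the union bound you use). The two bookkeeping checks you flag are precisely the points the paper also treats as formalities.
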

\begin{proof}
Denote by $q$ the $\frac{\lceil(n+1)(1-\alpha)\rceil}{n}$-quantile of $s_k$
for $1\leq k\leq n$.

For better readability we use 
\begin{equation}
\label{eq:abb_set}
\{((X_1,Y_1),\dots,(X_n,Y_n),Y)\mid a(X_1,Y_1,\dots,X_n,Y_n,Y)\leq b(X_1,Y_1,\dots,X_n,Y_n,Y)\}
\end{equation}
to describe the event 
\begin{equation*}
((X_1,Y_1),\dots,(X_n,Y_n),Y)\in\{((x_1,y_1),\dots,(x_n,y_n),y)\mid a(x_1,y_1,\dots,x_n,y_n,y)\leq b(x_1,y_1,\dots,x_n,y_n,y)\}.
\end{equation*}
If we intersect two symbols of the form as in \eqref{eq:abb_set}, 
we mean that $((X_1,Y_1),\dots,(X_n,Y_n),Y)$ is contained in the intersection of the corresponding sets in the given probability space.

Let further
\begin{align*}
\hat{A} &= \{((X_1,Y_1),\dots,(X_n,Y_n),Y)\mid s\left(X_{test},Y\right)\leq \hat{q}(X_1,...,X_n)\},
\\
A &= \{((X_1,Y_1),\dots,(X_n,Y_n),Y)\mid s\left(X_{test},Y\right)\leq q(Y_1,...,Y_n)\} \text{ and } 
\\
B &= \{((X_1,Y_1),\dots,(X_n,Y_n),Y)\mid \hat{q}(X_1,\dots,X_n) \geq q(Y_1,\dots,Y_n)\}.
\end{align*}

Recall from Lemma \ref{lem:s_k_hat_geq_s_k} that $\mathbb{P}\left(s\left(X,\hat{Y}\right)\geq s(X,Y)\right)\geq 1-\beta$. 
This allows us to apply Lemma \ref{lem:technical_lemma_2} with $(U_k,V_k) = (\hat{s}_k, s_k)$ to obtain
$\mathbb{P}\left(B\right)\geq 1-\beta>0$.

By the definitions of $\hat{A}$, $A$, and $B$ it follows that $\mathbb{P}(\hat{A}\cap B)\geq\mathbb{P}(A\cap B)$.
Thus, we have the following lower bound:
\begin{equation*}
\mathbb{P}\left(\hat{A}\right) \geq \mathbb{P}\left(\hat{A}\cap B\right) 
\geq \mathbb{P}\left(A\cap B\right)
= \mathbb{P}\left(A\mid B\right) \cdot \mathbb{P}\left(B\right) 
\geq \mathbb{P}\left(A\right)-\mathbb{P}\left(B^C\right)
\end{equation*}
where the last estimation uses Lemma \ref{lem:technical_lemma}. 

The definition of $A$ and Theorem \ref{thm:conformal_coverage_guarantee} yield
\begin{equation*}
\mathbb{P}\left(A\right)
\geq 1-\alpha
\end{equation*}
Consequently, using the definition of $\hat{A}$ we obtain
\begin{equation*}
\mathbb{P}(\left((X_1,Y_1),\dots,(X_n,Y_n),Y_{test}\right)\in\mathcal{C}(X_{test})) = \mathbb{P}\left( \hat{A} \right) \geq 1-\alpha-\beta
\end{equation*}
which proves the theorem.
\end{proof}

\begin{remark}
\label{rem:estimate_can_be_sharp}
In \eqref{eq:ccg_without_labeled_data} equality can be approximately achieved. 
For example, if $\hat{f}$ is a classification model with accuracy $1$ and $\alpha\rightarrow0$, 
then $\mathbb{P}\left(\left((X_1,Y_1),\dots(X_n,Y_n),Y_{test}\right)\in\mathcal{C}\left(X_{test}\right)\right)\rightarrow1$.
\end{remark}

The following remark underlines that in practice we are interested in
choosing a minimal \(\tilde{\beta}\) such that \(\hat{f}\) is
\((\tilde{\beta},\beta)\)-exact.

\begin{remark}
\label{rem:choose_tilde_beta_minimal}
If $\tilde{\beta}_1<\tilde{\beta}_2$, 
then the definition in \eqref{eq:def_T_tilde_beta_X} yields $T_{\tilde{\beta}_1}(X_k)\subseteq T_{\tilde{\beta}_2}(X_k)$ for every $1\leq k\leq n$. 
This implies $\hat{s}_k(\tilde{\beta}_1)\leq\hat{s}_k(\tilde{\beta}_2)$ and consequently, 
$\hat{q}(\tilde{\beta}_1)\leq\hat{q}(\tilde{\beta}_2)$. 
Thus $\mathcal{C}(X_{test})$ for $\tilde{\beta}_1$ is contained in $\mathcal{C}(X_{test})$ for $\tilde{\beta}_2$, i.e. the function that maps $\tilde{\beta}$ to the corresponding set $\mathcal{C}(X_{test})$ is monotonically increasing.

Moreover, if $\tilde{\beta}\rightarrow\infty$, the set $T_{\tilde{\beta}}(X_k)$ converges to $T$ for every $1\leq k\leq n$. By \eqref{eq:def_Y_k_hat} this implies that $\hat{Y}_k$ is contained in $\text{argmax}_{y\in T} s(X_k, y)$. 

In the case where $\hat{f}$ is a regression model and $T$ is an unbounded metric space with $s$ as in \eqref{ex:score_functions_reg}, 
this implies $s\left(X_k,\hat{Y}_k\right)\rightarrow\infty$ for every $1\leq k\leq n$ and thus $\hat{q}\rightarrow\infty$. 
Hence, $\mathcal{C}(X_{test})=(I\times T)^n\times T$ and the estimation in \eqref{eq:score_function_estimation} is trivially satisfied. 

If in the classification case at least in $\lceil(n+1)\alpha\rceil$ of the $n$ calibration data, the classification probability $p_{\hat{f},\hat{Y}_k}(X_k)$ equals $0$, 
then as in the previous case $\mathcal{C}(X_{test})=(I\times T)^n\times T$ and the estimation in \eqref{eq:score_function_estimation} is trivially satisfied.
\end{remark}

\section{Illustration}\label{sec-example}

We present a simple example that demonstrates the result using the
statistical software R (R Core Team 2024). Let us assume a dataset with
the height, the weight, and the gender of 10000 persons (source: Arif
(2018)). To define a classification problem with more than two classes,
we denote a person with a BMI larger than 25 as ``overweighted'' and
``normal weighted'' otherwise following the official definition (World
Health Organisation 2000). The goal is to differ between
over-/normal-weighted women/men based on their height and weight. The
entire dataset is depicted in Figure~\ref{fig-data}.

\begin{figure}

\centering{

\pandocbounded{\includegraphics[keepaspectratio]{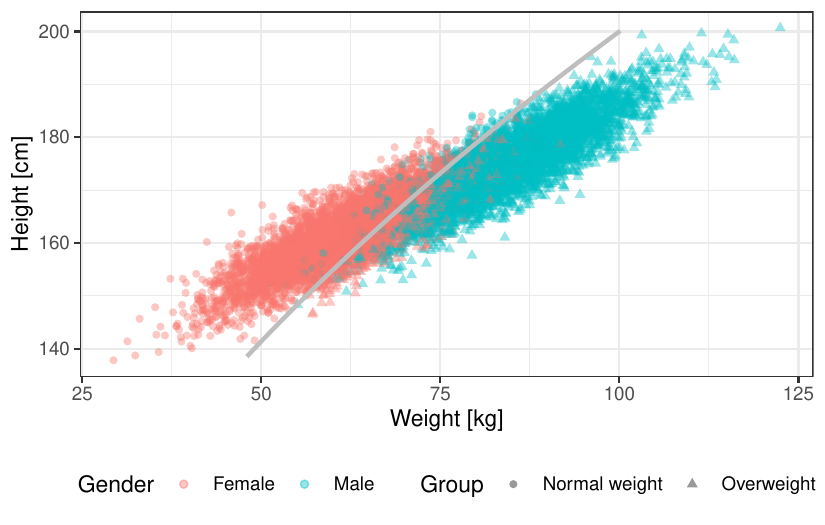}}

}

\caption{\label{fig-data}Dataset with height, weight, and classes. The
grey line indicates the limit for a BMI of 25.}

\end{figure}%

Now, the dataset is randomly split into three parts. The first part,
consisting of \(n_{train} = 6000\) persons is used to train a random
forest for the classification task described above. The second part with
\(n_{calib} = 1000\) has two purposes. First, it is used to estimate the
random forest's accuracy. Second, for the ``traditional'' conformal
prediction approach, it is used as a calibration set. Eventually, the
third dataset with \(n_{test} = 3000\) persons is used to evaluate how
well the conformal prediction approaches - using labeled and unlabeled
calibration data, respectively - work.

We compare the two conformal predictions approaches using a value of
\(\alpha = 0.05\).

For the first approach, we assume knowledge of the true outcomes in the
\(n_{calib}\) patients. The critical value for the predicted probability
of a class to be included in the conformity set is \(q = 0.174\) here.
Using this, we observe that for 95.1\% of the persons in the test
dataset, their conformity set includes their true class.

If we only use the calibration set to estimate the accuracy of the
random forest, we obtain an accuracy estimate of 88.7\%. Using the
predicted instead of the true values of the calibration data leads to an
inclusion boundary for the predicted probability of \(\hat{q} = 0.575\).
Consequently, for 86.8\% of the persons in the test dataset their
conformity set covers their true outcome. This is a larger value than
the value of 0.95 - 0.113 = 0.837, which is the lower bound for the
conformity according to the presented theoretical result.

Let us analyze one concrete person. This person from the test set is
female, 154cm tall and weighs 59kg. With an BMI of 24.9, she is a
``woman in normal weight'' but very close to being a ``woman with
overweight''. The random forest gives her a probability of 0.264 to be a
woman in normal weight and 0.625 to be a woman with overweight. Both
values are larger than \(q = 0.174\) such that the true class is in the
conformity set of the conformal prediction approach when the labels of
the calibration data are known. However, 0.264 is smaller than
\(\hat{q} = 0.575\) and therefore, the conformity set in case of unknown
labels of the calibration data only consists of the wrong class ``woman
with overweight''.

\section{Discussion}\label{sec-discussion}

In this paper, we presented a theoretical foundation of how conformal
prediction can be applied in situations where no labeled data (and thus
no calibration set) are available. The underlying definition of
\((\tilde{\beta}, \beta)\)-exactness may seem quite technically but can
be reduced to common measures as the accuracy for classification or
quantiles of the absolute prediction error in regression settings. Of
note, these quantities are not known in practice but can only be
estimated on the training data. This induces another source of
uncertainty and should be considered when interpreting conformal
prediction results derived by the presented approach.

It is an appealing property that the presented methodology can be
implemented model-agnostically and by only providing the trained model
and a performance number as accuracy or median absolute error. Given a
test set with sufficient number of subjects, the conformal prediction
intervals/sets can be derived by a leave-one-out approach. For each
observation, compute the quantile \(\hat{q}\) using all other test
observations and their predicted outcomes as described above. Then, the
conformal prediction interval for the missing observation can be derived
using the presented methodology. Rotating the observation that is not
used for computing \(\hat{q}\) gives a conformal prediction interval/set
for each subject in the test set as demonstrated in the example in
Section~\ref{sec-example}.

The present article introduces a foundational framework for conformal
prediction without knowledge of labeled calibration data. However,
variants that are more suited to the problems discussed in Tibshirani et
al. (2020), Gibbs and Candès (2021), Vovk (2015), Vovk et al. (2018),
and Guan (2022) in the case of unlabeled calibration data could be of
further interest. An essential question that emerges in these contexts
is a more suited formulation of \((\tilde{\beta},\beta)\)-exactness in
the specific settings. Exploring these settings in greater detail could
yield valuable insights.

\section*{References}\label{references}
\addcontentsline{toc}{section}{References}

\phantomsection\label{refs}
\begin{CSLReferences}{1}{0}
\bibitem[\citeproctext]{ref-Angelopoulos2025}
Angelopoulos, Anastasios N., Rina Foygel Barber, and Stephen Bates.
2025. {``Theoretical Foundations of Conformal Prediction.''}
\url{https://arxiv.org/abs/2411.11824}.

\bibitem[\citeproctext]{ref-Angelopoulos2022}
Angelopoulos, Anastasios N., and Stephen Bates. 2022. {``A Gentle
Introduction to Conformal Prediction and Distribution-Free Uncertainty
Quantification.''} \url{https://arxiv.org/abs/2107.07511}.

\bibitem[\citeproctext]{ref-Kaggle}
Arif, Majid. 2018. {``{weight and height.csv}.''}
\url{https://www.kaggle.com/datasets/majidarif17/weight-and-heightcsv/data}.

\bibitem[\citeproctext]{ref-Gibbs2021}
Gibbs, Isaac, and Emmanuel Candès. 2021. {``Adaptive Conformal Inference
Under Distribution Shift.''} \url{https://arxiv.org/abs/2106.00170}.

\bibitem[\citeproctext]{ref-Guan2022}
Guan, Leying. 2022. {``Localized Conformal Prediction: A Generalized
Inference Framework for Conformal Prediction.''} \emph{Biometrika} 110
(1): 33--50. \url{https://doi.org/10.1093/biomet/asac040}.

\bibitem[\citeproctext]{ref-Molnar2023}
Molnar, Christoph. 2023. \emph{Introduction to Conformal Prediction with
Python}. \url{https://christophmolnar.com/books/conformal-prediction}.

\bibitem[\citeproctext]{ref-Papadopoulos2022}
Papadopoulos, Harris, Kostas Proedrou, Volodya Vovk, and Alex Gammerman.
2002. {``Inductive Confidence Machines for Regression.''} In
\emph{Machine Learning: ECML 2002}, 345--56. Berlin, Heidelberg:
Springer Berlin Heidelberg.

\bibitem[\citeproctext]{ref-R}
R Core Team. 2024. \emph{R: A Language and Environment for Statistical
Computing}. Vienna, Austria: R Foundation for Statistical Computing.
\url{https://www.R-project.org/}.

\bibitem[\citeproctext]{ref-Tibshirani2020}
Tibshirani, Ryan J., Rina Foygel Barber, Emmanuel J. Candes, and Aaditya
Ramdas. 2020. {``Conformal Prediction Under Covariate Shift.''}
\url{https://arxiv.org/abs/1904.06019}.

\bibitem[\citeproctext]{ref-Vovk2015}
Vovk, Vladimir. 2015. {``Cross-Conformal Predictors.''} \emph{Annals of
Mathematics and Artificial Intelligence} 74: 9--28.
\url{https://doi.org/10.1007/s10472-013-9368-4}.

\bibitem[\citeproctext]{ref-Vovk1999}
Vovk, Vladimir, Alexander Gammerman, and Craig Saunders. 1999.
{``Machine-Learning Applications of Algorithmic Randomness.''} In
\emph{Proceedings of the Sixteenth International Conference on Machine
Learning}, 444--53. Morgan Kaufmann.

\bibitem[\citeproctext]{ref-Vovk2018}
Vovk, Vladimir, Ilia Nouretdinov, Valery Manokhin, and Alexander
Gammerman. 2018. {``Cross-Conformal Predictive Distributions.''} In
\emph{Proceedings of the Seventh Workshop on Conformal and Probabilistic
Prediction and Applications}, 91:37--51. Proceedings of Machine Learning
Research. PMLR. \url{https://proceedings.mlr.press/v91/vovk18a.html}.

\bibitem[\citeproctext]{ref-BMI}
World Health Organisation. 2000. {``Obesity: Preventing and Managing the
Global Epidemic. Report of a WHO Consultation.''} \emph{World Health
Organization Technical Report Series} 894: i--253.

\end{CSLReferences}

\end{document}